\newcommand{\kotak}{\rule{.08in}{.08in}}
\newcommand{\kotakt}{\rule{.08in}{.12in}}
\newcommand{\keywords}[1]{\par\addvspace\baselineskip
\noindent\keywordname\enspace\ignorespaces#1}
\begin{document}

\mainmatter  

\title{Simple Search Engine Model: Selective Properties}

\titlerunning{Simple Search Engine Model: Selective Properties}

\author{Mahyuddin K. M. Nasution$^{1}$%
\thanks{A draft}}%

\authorrunning{M. K. M. Nasution}

\institute{$^{1}$Departemen Teknologi Informasi, FASILKOM-TI,\\
Universitas Sumatera Utara, Padang Bulan 20155 USU, Medan, Indonesia.\\
\mailsa\\}

\toctitle{draf}
\tocauthor{}
\maketitle

\begin{abstract}
In this paper we study the relationship between query and
search engine by exploring the selective properties based on a simple
search engine. We used the set theory and utilized the words and terms for
defining singleton and doubleton in the event spaces and then provided their implementation for proving the existence of the shadow of  micro-cluster.
\keywords{singleton space, doubleton space, search term, query, triplet}
\end{abstract}

\section{Introduction}

Numerious studies of natural language processing (NLP) and Semantic Web utilize a search engine, mainly to obtain a set of documents, mainly to obtain a set of documents that include a given query and to get statistical information about an object such as entity name in hit count \cite{nasution2011a} by the search engine, but to bring the NLP and Semantic Web to life such as the information processing services provide the knowledge, for example: ontology construction,  knowledge extraction, question answering, and other purposes, all need more effort. However, to produce the enhanced relationship between a search engine and a query as novel property, we already defined some instances about simple search engines: singleton \cite{nasution2012a} and doubleton \cite{nasution2012b} spaces. This model based on the simple architecture of search engine for representing the collection of documents in general such that this model can distinguish the features of Web documents, whereby any query can gives the essential purpose of Information Retrieval \cite{nasution2012} strategies to meet user needs. Therefore, this paper aims to address some properties based on relations between singleton and doubleton in a triplet. We also provided the basis of this model to the micro-cluster for implementing the adaptive properties. 

\section{Some Terminologies}

We defined some terminologies as follows \cite{nasution2011a,nasution2012a,nasution2012b}.

\begin{definition}
\label{def:term}
A term $t_x$ consists of at least one or a set of words in a pattern, or $t_k = (w_1w_2\dots w_l)$, $l\leq k$, $k$ is a number of parameters representing word $w$, $l$ is the number of tokens (vocabularies) in $t_k$, $|t_k| = k$ is size of $t_k$. \kotak
\end{definition}

\begin{definition}
\label{def:searchengine}
Let a set of web pages indexed by search engine be $\Omega$, i.e., a set contains ordered pair of the terms $t_{k_i}$ and the web pages $\omega_{k_j}$, or $(t_{k_i},w_{k_j})$, $i=1,\dots,I$, $j = 1,\dots,J$. The relation table that consists of two columns $t_k$ and $\omega_k$ is a representation of $(t_{k_i},\omega_{k_j})$ where $\Omega_k = \{(t_k,\omega_k)_{ij}\}\subset\Omega$ or $\Omega_k = \{\omega_{k_1},\dots,\omega_{k_j}\}$. The cardinality of $\Omega$ is denoted by $|\Omega|$. \kotak
\end{definition}

\begin{definition}
\label{def:singleton}
Let $t_x$ is a search term, and $t_x\in{\cal S}$ where ${\cal S}$ is a set of singleton search term of search engine. A vector space $\Omega_x\subseteq\Omega$ is a singleton search engine event (\emph{singleton space of event}) of web pages that contain an occurrence of $t_x\in\omega_x$. The cardinality of $\Omega_x$ is denoted by $|\Omega_x|$. \kotak
\end{definition}

\begin{definition}
\label{def:doubleton}
Let $t_x$ and $t_y$ are two different search term, $t_x\ne t_y$, $t_x,t_y\in{\cal S}$, where ${\cal S}$ is a set of singleton search term of search engine. A doubleton search term is ${\cal D} = \{\{t_x,t_y\}: t_x,t_y\in\Sigma\}$ and its vector space denoted by $\Omega_x\cap\Omega_y$ is a double search engine event (\emph{doubleton space of event}) of web pages that contain a co-occurrence of $t_x$ and $t_y$ such that $t_x,t_y\in\omega_x$ and $t_x,t_y\in\omega_y$, where $\Omega_x, \Omega_y, \Omega_x\cap\Omega_y\subseteq\Omega$. \kotak
\end{definition}

Some adaptive properties are defined to know the efficient ways to access information by using simple search engine model. In general, all adaptive properties is to adopt the meaning of singleton and doubleton in equations as follows
\[
|\Omega_x| = |\Omega_x|+\Omega_y|
\]
and
\[
|\Omega_x\cap\Omega_y| = |\Omega_x\cap\Omega_y|+|\Omega_x\cap\Omega_x|+|\Omega_y\cap\Omega_y|
\]
Therefore, statistically either singleton or the doubleton contain bias information. 

Usually to improve the quality of statistical information by a search engine of a given query, the count is processed statistically based on above properties. However, to make an additional improvement, we must devote more attention to results of search engine and carefully handle the count for developing the selective model. 

\section{The Selective Properties}

The purpose of selective properties is to construct an approach for eliminating bias by using the selected results of simple search engine. One of results by a search engine as follows \cite{nasution2011c}.

\begin{definition}
\label{def:snippet}
Let $t_x$ is a search term. $S = \{w_1,\dots,w_{max}\}$ is a Web snippet (briefly snippet), $S\subset\omega_{x_i}\in\Omega$, where $max\leq 50$ words to the left and right of $t_x$ that returned by any search engine. $L = \{S_i:i=1,\dots,n\}$ is a list of snippets. \kotak
\end{definition}

\subsection{Triplet}

A construction of relationship based on frequency of words between search term, snippets, and words is as follows \cite{nasution2011b}.

\begin{definition}
\label{def:triplet}
A relationship between search term, web snippets and words is defined as the mixture $p(t_o,S,w) = t_o\times S\times w$, $t_o\in O$, $S\in L\subseteq\Omega$, $w\in S$. A vector space of $P(t_o,S,w)$ is defined as ${\bf w} = \{w_i,\dots,w_j\}$. \kotak
\end{definition}

$ = [\nu_i,\dots,\nu_j]$, $\nu_i\geq\dots\geq\nu_j$, where $w_i,\dots,w_j$ are the unique words in $S$ and $\nu_i,\dots,\nu_j$ are the weights of word.

The relations of the search term and the Web snippets and the words, we called it as triplet, or we rewrote as a term-snippet-word. The triplet is a base for exploring features of: Web pages or Web documents. The features exploration is to describe an object literally in text if the purpose of search term is to explain the object. A relation between term and snippet logically is $|t_0\cap S| = 1$ if $t_o\in S$ and $= 0$ otherwise, and 
\begin{equation}
P(t_o\cap S) = \frac{1}{2},
\end{equation}
or probability of the search term in list of snippets are
\begin{equation}
P(t_o\cap L) = \frac{\sum_{i=1}^n\frac{1}{2}}{n}.
\end{equation}
A relation of snippet-word interpretated as follows,
\begin{equation}
\label{eq:sw}
P(S\cap {\bf w}) = \sum_{j=1}^m \frac{1}{max}
\end{equation}
where $m$ is a number of same word in vocabulary, or probability of the word in list of snippets is as follows
\begin{equation}
\label{eq:lw}
P(L\cap {\bf w}) = \sum_{i=1}^n \sum_{j=1}^m\frac{1}{max_i},
\end{equation}
while the term-word has two representations logically, i.e. $|t_o\cap w| = \sum_{j=1}^m \frac{1}{max}$ if $t_0\in S$ and $=0$ if $t_0\not\in S$, or probability of $t_o\cap w$ in $S$ as follows
\begin{equation}
P(t_o\cap {\bf w}) = \frac{\sum_{j=1}^m\frac{1}{max}}{2}.  
\end{equation} 
Probability of $t_o\cap w$ in $L$ to be
\[
P(t_o\cap {\bf w})_L = \sum_{i=1}^n\frac{\sum_{j=1}^m\frac{1}{max}}{2}
\]
\begin{equation}
\label{eq:tw}
P(t_o\cap {\bf w})_L = \sum_{i=1}^n\sum_{j=1}^m \frac{1}{2~max_i}
\end{equation}

Trivially for each snippet there exists a set of words ${\bf w} = \{w_i|i=1,\dots,n\}$, i.e. ${\bf w}$ contains at least one word of search term or the search term self. 

\begin{definition}
Word frequency is a word number uniquely in a set of words, i.e. $\exists\nu\in\Re$ $\forall w \in {\bf w}$, $\Re$ is a real number set, and $\nu\in\Re$ as a weight of word. Generally, there is $1:1$ function $\varpi$ such that 
\begin{equation}
\label{eq:wf}
\varpi : {\bf w} \rightarrow \Re
\end{equation}
In this case, $\Re$ as a vector space of ${\bf w}$. \kotak
\end{definition}

\begin{lemma}
\label{lem:es1}
If a set of words is representation of snippets in list of snippets, then vector space of words set contains probability of word in snippets.
\end{lemma}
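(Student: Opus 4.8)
The plan is to unpack the phrase ``a set of words is representation of snippets'' through Definition~\ref{def:snippet} and the triplet construction, and then to show that the weight coordinate of each word in the vector space ${\bf w}$ is exactly a normalization of the probability given by Equation~(\ref{eq:sw}). First I would fix a snippet $S = \{w_1,\dots,w_{max}\}$ drawn from the list $L = \{S_i\}$ and let ${\bf w} = \{w_i,\dots,w_j\}$ be the set of its unique words, which by the triplet definition already carries the ordered weight vector $[\nu_i,\dots,\nu_j]$ with $\nu_i\geq\dots\geq\nu_j$. The hypothesis that ${\bf w}$ \emph{represents} the snippets I would read as the statement that each snippet is recoverable from its unique words together with their multiplicities, so that the weight $\nu$ of a word $w$ records the number $m$ of occurrences of $w$ in $S$.

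Next I would invoke the word-frequency function $\varpi:{\bf w}\rightarrow\Re$ of Equation~(\ref{eq:wf}), which is $1:1$ and assigns each word its weight. The central computation is to identify this weight with the probability of Equation~(\ref{eq:sw}): since $P(S\cap{\bf w}) = \sum_{j=1}^m \frac{1}{max} = \frac{m}{max}$ and $m$ is precisely the multiplicity captured by $\nu$, the coordinate $\nu$ both determines and is determined by $P(S\cap{\bf w})$. Hence the image $\Re = \varpi({\bf w})$, regarded as the vector space of ${\bf w}$, carries in each coordinate exactly the per-snippet probability of the corresponding word. I would then lift this from a single snippet to the whole list by summing over $i=1,\dots,n$, recovering Equation~(\ref{eq:lw}), namely $P(L\cap{\bf w}) = \sum_{i=1}^n\sum_{j=1}^m \frac{1}{max_i}$; because the representing word set for $L$ is the union of the per-snippet word sets with aggregated weights, the same bijection $\varpi$ shows that the vector space of the word set contains the probability of each word in the list of snippets, completing the argument.

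The step I expect to be the main obstacle is making the informal notion of ``representation'' precise: I must pin down that the weight $\nu$ is the raw frequency count $m$ rather than an already-normalized quantity, so that the division by $max$ is exactly the operation converting a weight into a probability, and I must use the $1:1$ property of $\varpi$ to guarantee that no word and no associated count is collapsed in passing to the vector space. Establishing this correspondence cleanly—so that ``contains'' is interpreted as ``the weight coordinates encode the probabilities via a fixed normalization''—is where the genuine content lies; the summation over the list is thereafter routine.
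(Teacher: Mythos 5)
Your proposal is correct in the loose standard this paper works at, but it takes a genuinely different route from the paper's own proof. You argue entirely inside the snippet--word relation: you read the weight $\nu$ as the raw multiplicity $m$, convert it to the per-snippet probability of Eq.~(\ref{eq:sw}) by dividing by $max$, and then aggregate over the list to recover Eq.~(\ref{eq:lw}), so that ``contains'' holds only up to a fixed normalization. The paper instead routes the argument through the search term: after observing that Eqs.~(\ref{eq:sw}) and (\ref{eq:lw}) are probabilities, it notes $|t_o\cap{\bf w}|\ne 0$ and combines Eq.~(\ref{eq:tw}) with the bijection $\varpi$ of Eq.~(\ref{eq:wf}) to identify the weight directly as
\[
\nu = \varpi(w) = \sum_{i=1}^n\sum_{j=1}^m\frac{1}{2\,max_i},
\]
so that $\Re$ literally contains the term--word probabilities, including the factor $\frac{1}{2}$ inherited from the term--snippet relation. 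The paper's identification buys two things your version does not: ``contains'' holds literally (the coordinates of the vector space \emph{are} the probabilities, not encodings of them), and the resulting vector space is tied to the search term $t_o$, which is what allows the subsequent Proposition to read Lemmas~\ref{lem:es1} and \ref{lem:es2} as producing a vector space \emph{of the triplet} $p(t_o,S,w)$. Your route is more elementary and arguably closer to the lemma's literal wording, which mentions only words and snippets; but since you never use Eq.~(\ref{eq:tw}) or the condition $|t_o\cap{\bf w}|\ne 0$, the vector space you construct is independent of $t_o$, and the connection to the triplet would have to be re-argued separately where the paper gets it for free.
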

\begin{proof}
In Eq. (\ref{eq:sw}) as probability of word based on frequency of word in snippet where $m$ is number of word uniquely in snippet. Therefore, Eq. (\ref{eq:lw}) also is probability of word based on frequency of word in list of snippets. Reasonally, because $|t_o\cap {\bf w}| \ne 0$. Based on Eq. (\ref{eq:tw}) and Eq. (\ref{eq:wf}) we have
\[
\nu = \varpi(w) = \sum_{i=1}^n\sum_{j=1}^m \frac{1}{2~max_i}
\]
as probability of word in a list of snippets for a search term, and $\Re$ contains the value of Eq. (\ref{eq:tw}) for all $w\in{\bf w}$.  \kotakt
\end{proof}

\begin{lemma}
\label{lem:es2}
If a set of words is representation of snippets in list of snippets, then the set of words is an event space.
\end{lemma}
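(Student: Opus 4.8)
The plan is to upgrade Lemma~\ref{lem:es1} from the statement ``the word set carries probabilities'' to ``the word set is a genuine space of events'' by matching ${\bf w}$ against the defining features of the singleton space $\Omega_x$ of Definition~\ref{def:singleton}. First I would fix the underlying sample space to be the list of snippets $L$, so that each unique word $w\in{\bf w}$ is identified with the event ``$w$ occurs in some snippet of $L$'', exactly as $t_x\in\omega_x$ identifies a page-event in $\Omega_x$. By the trivial remark following Eq.~(\ref{eq:tw}), ${\bf w}$ is non-empty, since it contains at least the search term itself, so the candidate space is well-defined.

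Next I would verify the probability-measure axioms on ${\bf w}$, drawing the weight of each word from the $1\!:\!1$ map $\varpi$ of Eq.~(\ref{eq:wf}). Non-negativity is immediate because $\varpi(w)=\sum_{i=1}^n\sum_{j=1}^m \frac{1}{2~max_i}$ is a sum of strictly positive terms, as recorded in the proof of Lemma~\ref{lem:es1}; boundedness follows since each snippet's contribution is capped by its length $max_i$. Because $\varpi$ is injective, distinct words receive distinct weights and so correspond to mutually exclusive elementary outcomes, which yields finite additivity over disjoint word-events at no extra cost.

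The crux is normalization. Here I would sum $\varpi(w)$ over all $w\in{\bf w}$ and show the aggregate collapses onto the list-level quantity $P(t_o\cap{\bf w})_L$ of Eq.~(\ref{eq:tw}). The key observation is that the multiplicities $m$ appearing in Eq.~(\ref{eq:sw}) exhaust every token of a snippet, so that for a fixed snippet $i$ the per-word counts sum to $max_i$ and the factor $\frac{1}{max_i}$ reduces that snippet's contribution to a constant; aggregating over the $n$ snippets of $L$ then reproduces exactly the double sum of Eqs.~(\ref{eq:lw}) and (\ref{eq:tw}). This identifies $P(t_o\cap{\bf w})_L$ as the total mass of the measure and closes the axioms, so that $({\bf w},\varpi)$ qualifies as an event space in the same sense as $\Omega_x$.

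I expect the main obstacle to be precisely this bookkeeping of multiplicities. The symbol $m$ is re-used both for ``the number of identical words in the vocabulary'' in Eq.~(\ref{eq:sw}) and implicitly for the per-snippet token count, and Eq.~(\ref{eq:tw}) carries the factor $\frac{1}{2}$ inherited from the term--snippet relation $P(t_o\cap S)=\frac{1}{2}$. Making the normalization honest will require either absorbing this $\frac{1}{2}$ and the snippet count $n$ into an explicit normalizing constant, or reading ``event space'' in the weaker, unnormalized sense already used for $\Omega_x$, where cardinalities rather than normalized probabilities label the events. I would flag this as the one place where the argument must commit to a convention instead of computing mechanically.
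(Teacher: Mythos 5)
Your proposal takes a genuinely different route from the paper, and it both misses the intended content of the lemma and contains an internal error. The paper's proof is two lines: each word $w\in{\bf w}$ is itself a search term, so by Definition \ref{def:singleton} it generates a singleton space $\Omega_w\subseteq\Omega$, and each pair $w_i,w_j$ generates by Definition \ref{def:doubleton} a doubleton space $\Omega_{w_i}\cap\Omega_{w_j}$; the collection of these events is what makes ${\bf w}$ an ``event space.'' The underlying sample space is the web index $\Omega$, not the snippet list $L$. This distinction is not cosmetic: the sentence immediately after the lemma (``the event space contains vectors $\mu_i = |\Omega_{w_i}|$'') and the later mirror-shade construction of Definition \ref{def:mirorshade} rely precisely on the words acquiring singleton cardinalities $|\Omega_{w_i}|$ from the search engine, which your snippet-level construction never produces. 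By building a measure on $L$ out of $\varpi$ you are in effect re-proving Lemma \ref{lem:es1} (the $\nu$-vector of snippet frequencies), whereas Lemma \ref{lem:es2} exists to supply the second, independent vector space (the $\mu$-vector of singleton counts) that the subsequent definition of ``context'' pairs with the first.

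Two steps of your argument also fail on their own terms. First, the claim that injectivity of $\varpi$ makes distinct words ``mutually exclusive elementary outcomes'' is a non sequitur: distinct words routinely co-occur in the same snippet, so the events ``$w$ occurs in some snippet of $L$'' are not disjoint, and distinctness of weights has no bearing on disjointness of events. Second, the normalization you identify as the crux does not close: as you concede, the factor $\frac{1}{2}$ inherited from $P(t_o\cap S)=\frac{1}{2}$ and the snippet count $n$ prevent the total mass from equaling $1$, and ``committing to a convention'' is an acknowledgment that the axiom cannot be verified, not a verification of it. The paper sidesteps all of this because its notion of event space here is set-theoretic rather than measure-theoretic: it only asks that every word and every pair of words name an event in $\Omega$, which Definitions \ref{def:singleton} and \ref{def:doubleton} give immediately once the words are read as search terms.
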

\begin{proof}
As a search term each word in ${\bf w}$ based on Definition \ref{def:singleton} has $\Omega_w$, and each two words in pair based on Definition \ref{def:doubleton} has $\Omega_{w_i}\cap\Omega_{w_j}$. Thus ${\bf w}$ be an event space. \kotakt
\end{proof}

The event space contains vectors $\mu_i = |\Omega_{w_i}|$, $i=1,\dots,n$, and we called it as singleton event space.

\begin{proposition}
If $p(t_o,S,w)$ is a triplet for $t_o$, then there are at least one vector space of $p(t_o,S,w)$.
\end{proposition}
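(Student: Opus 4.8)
The plan is to reduce the existence claim to the non-emptiness of the word set ${\bf w}$ and then certify that this set qualifies as a vector space. First I would unfold Definition \ref{def:triplet}: a triplet $p(t_o,S,w)$ carries a search term $t_o$, a snippet $S\in L\subseteq\Omega$, and a word $w\in S$, and its vector space is declared to be the set of words ${\bf w}=\{w_i,\dots,w_j\}$. The assertion ``there is at least one vector space of $p(t_o,S,w)$'' therefore amounts to exhibiting one admissible, non-empty set ${\bf w}$.

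Next I would invoke the structure of a snippet from Definition \ref{def:snippet}, where $S=\{w_1,\dots,w_{max}\}$ is the block of up to $max$ words drawn from the left and right of the search term $t_o$. Because $t_o$ is precisely the term around which the snippet is formed, either $t_o$ itself or one of its words occurs in $S$; this is exactly the remark recorded immediately after Eq. (\ref{eq:tw}), namely that ${\bf w}$ contains at least one word of the search term or the search term itself. Hence ${\bf w}\neq\emptyset$, and the term-word relation yields $|t_o\cap{\bf w}|\neq 0$, which is the very hypothesis used in the proof of Lemma \ref{lem:es1}.

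Then I would certify that this non-empty ${\bf w}$ genuinely is a vector space in the required sense. By Lemma \ref{lem:es1} the set ${\bf w}$ carries, through the weight function $\varpi$ of Eq. (\ref{eq:wf}), the probability of each of its words in the list of snippets, so that $\Re$ serves as the underlying value space of ${\bf w}$; and by Lemma \ref{lem:es2} the same ${\bf w}$ is an event space assembled from the singleton spaces $\Omega_{w_i}$ and the doubleton spaces $\Omega_{w_i}\cap\Omega_{w_j}$. Taken together these identify ${\bf w}$ as a legitimate vector space of $p(t_o,S,w)$, which completes the existence argument.

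The main obstacle I anticipate is not computational but definitional: pinning down precisely what ``vector space of $p(t_o,S,w)$'' means and making the non-emptiness step rigorous rather than merely asserted. The delicate point is to guarantee that the search term is always represented among the snippet words, since a snippet excluding $t_o$ entirely would leave ${\bf w}$ empty and the claim would fail. I would therefore lean on Definition \ref{def:snippet}'s explicit construction of $S$ as the words surrounding $t_o$ to force the occurrence of $t_o$, thereby securing $|t_o\cap S|=1$ and the non-triviality of ${\bf w}$.
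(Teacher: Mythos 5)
Your proposal is correct and takes essentially the same route as the paper: the paper's proof is simply the statement that the proposition is a direct consequence of Lemma \ref{lem:es1} and Lemma \ref{lem:es2}, which is exactly the pair of lemmas you invoke as the core of your argument. The additional scaffolding you supply (unfolding Definition \ref{def:triplet}, securing $|t_o\cap{\bf w}|\neq 0$ from the remark after Eq. (\ref{eq:tw})) merely makes explicit what the paper leaves implicit, since that non-emptiness condition is already the hypothesis used inside the paper's own proof of Lemma \ref{lem:es1}.
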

\begin{proof}
The direct consequence of Lemma \ref{lem:es1} and Lemma \ref{lem:es2}. \kotakt
\end{proof}

Thus, based on Lemma \ref{lem:es1} and Lemma \ref{lem:es2}, there are two vector spaces for a list of snippets, i.e.
\begin{definition}
A set of words ${\bf w}$ is a context if ${\bf w}$ has two vector spaces such that satisfies
\begin{enumerate}
\item for $[\nu_i,\dots,\nu_j]$ as vector space,  $\nu_i\geq\dots\geq\nu_j$, and
\item for $[\mu_i,\dots,\mu_j]$ as singleton vector space, 
$\mu_i\geq\dots\geq\mu_j$. \kotak
\end{enumerate} 
\end{definition}

\subsection{Micro-cluster}

In part of section, we define the words undirected graph $G = (V,E)$ to describe the relations between words \cite{nasution2010}. 

\begin{definition}
\label{def:mikrokluster}
Assume a sub-graph $G'$, $G'\subset G$, $G'$ is a \emph{micro-cluster} satisfies the conditions as follows
\begin{enumerate}
\item There are a set of word ${\bf w} = \{w_i,\dots,w_j\}$ whose vector space $[\nu_i,\dots,\nu_j]$ and $\nu_i\geq\dots\geq\nu_j\geq\alpha$, where $\alpha$ is a threshold.
\item There are an one-one function $f : {\bf w}\rightarrow V$ such that $f(w) = v$, $\forall w\in{\bf w}\exists v\in V$ where $v\in V$ is a vertex in $G'$.
\item There are an one-one function $\rho : {\bf w}\times{\bf w}\rightarrow E$ such that $\rho(w_i,w_j)=e$, $\forall w_i,w_j\in{\bf w}$, where $\rho$ is a relation among words and $e\in E$ is a edge in $G'$.
\end{enumerate}
The micro-cluster is denoted by $G' = \langle V,E,{\bf w},f,\rho,\alpha\rangle$. \kotak
\end{definition}

A micro-cluster is maximal clique sub-graph of entity name where the node represents word that the highest score in document. However, let there is a set of words ${\bf w}$ whose weights above the threshold, the collection of words do not exactly refer to the same entity. To group the words into the appropriated cluster, we construct the trees of words. This based on an assumption that the words are that appear in same domains having closest relation. The tree is an optimal representation of relation in graph $G$.

\begin{definition}
\label{def:optimalmikrokluster}
A tree $T$ is an \emph{optimal micro-cluster} if and only if $T$ is a sub-graph of micro cluster $G'$, and is denoted by $T = \langle V_T,E_T,{\bf w}_T,f,\rho,\alpha\rangle$, where $V_T\subseteq V$, $E_T\subset E$, and ${\bf w}_T\subseteq {\bf w}$. \kotak
\end{definition}

In building the optimal micro-cluster, we save the strongest relations in $T$ between a word and another in $G'$ until $T$ has no cycle. A cycle is a sequence of two or more edges $(v_i,v_j),(v_j,v_k),\dots,(v_{k+1},v_i) \in E$ such that there is an optimal edge $(v_i,v_j)\in E$ connects both ends of sequence. Let a word is introduced as intrusive word about an entity, and there are at least one word of optimal micro-cluster has strongest relations with the entity, and an optimal micro-cluster is a group of words refer to that entity. However, the overlap keyword also exists in the same list. We define a strategy to select a relevant keywords among all list candidates. In this case, there are a few potential keywords for identifying the entity name.  

\begin{definition}
\label{def:mirorshade}
A vector space ${\bf s} = [|{\bf w_i}|,\dots,|{\bf w_j}|]$ is a \emph{mirror shade} of micro-cluster $G'$ if there is an one-one function $g : {\bf w}\rightarrow{\bf s}$, where ${\bf w_i},\dots,{\bf w_j}$ are in event space. Let ${\bf z}$ is a vector whose greatest value in ${\bf s}$, the vector space in range of $[0,1]$ is relatively defined as ${\bf s}_{[0,1]} = [|{\bf w_i}|/|{\bf z}|,\dots,|{\bf w_j}|/|{\bf z}|] = [\mu_i,\dots,\mu_j]$.\kotak
\end{definition}

We also can generate for example another vectors from $\Omega_i,\dots,\Omega_j$ for words $w_i,\dots,w_j$ respectively such that $[\mu_i,\dots,\mu_j] = [\Omega_i,\dots,\Omega_j]$ is a mirror shade of $[\nu_i,\dots,\nu_j]$ from a set of word frequencies.

\begin{theorem}
\label{theo:optimalmikrokluster}
Let ${\bf s}_T\subseteq{\bf s}$, then ${\bf s}_T$ is the mirror shade of an optimal micro-cluster $T$.
\end{theorem}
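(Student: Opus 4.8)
The plan is to lean on the one-one correspondence $g$ furnished by Definition~\ref{def:mirorshade} and to transport the subset relation ${\bf s}_T \subseteq {\bf s}$ backward along it. By Definition~\ref{def:mirorshade} the mirror shade of the micro-cluster $G'$ is the vector space ${\bf s} = [\,|{\bf w}_i|,\dots,|{\bf w}_j|\,]$ together with the one-one map $g : {\bf w} \rightarrow {\bf s}$; since $g$ is injective with image ${\bf s}$, it admits an inverse $g^{-1}$ defined on all of ${\bf s}$. Applying $g^{-1}$ to the given sub-vector yields a word set ${\bf w}_T = g^{-1}({\bf s}_T) \subseteq {\bf w}$, and because $g$ is one-one this ${\bf w}_T$ is the unique pre-image of ${\bf s}_T$.

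First I would equip ${\bf w}_T$ with graph structure by restricting the maps $f$ and $\rho$ of Definition~\ref{def:mikrokluster}, obtaining vertex and edge sets $V_T = f({\bf w}_T) \subseteq V$ and $E_T = \rho({\bf w}_T \times {\bf w}_T) \subseteq E$; this produces a sub-graph of $G'$. To qualify as an optimal micro-cluster in the sense of Definition~\ref{def:optimalmikrokluster}, the sub-graph must be a tree, so I would invoke the construction described after that definition, namely retaining only the strongest relation between each pair of words until no cycle remains, to pass to a spanning tree $T = \langle V_T, E_T, {\bf w}_T, f, \rho, \alpha\rangle$ on the same word set ${\bf w}_T$. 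The weights inherited from ${\bf s}$ still satisfy the threshold condition $\nu_i \geq \dots \geq \nu_j \geq \alpha$, since every entry of ${\bf s}_T$ is an entry of ${\bf s}$.

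It then remains to verify the defining clauses of a mirror shade for $T$. The restriction $g|_{{\bf w}_T} : {\bf w}_T \rightarrow {\bf s}_T$ is one-one because it is the restriction of the injection $g$, and its image is exactly ${\bf s}_T$ by construction; moreover each word of ${\bf w}_T$ lies in ${\bf w}$, so by Lemma~\ref{lem:es2} it carries a singleton space $\Omega_w$ and each pair a doubleton space $\Omega_{w_i} \cap \Omega_{w_j}$, placing ${\bf w}_T$ in the event space. Hence ${\bf s}_T$ meets every requirement of Definition~\ref{def:mirorshade} relative to $T$, and ${\bf s}_T$ is the mirror shade of the optimal micro-cluster $T$. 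The step I expect to be the main obstacle is the tree-extraction in the second paragraph: a subset of words does not in general induce an acyclic sub-graph, so one must argue that the cycle-removal procedure preserves both the word set ${\bf w}_T$ and the ordering of weights, and that the resulting tree is genuinely a sub-graph of $G'$ rather than of some larger graph.
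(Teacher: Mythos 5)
Your proposal is correct in substance and follows the same skeleton as the paper's own proof: pull ${\bf s}_T$ back along the one-one map $g$ to obtain ${\bf w}_T = g^{-1}({\bf s}_T)\subseteq{\bf w}$, transport this word set into the graph via the maps $f$ and $\rho$ of Definition \ref{def:mikrokluster} to obtain $V_T$ and $E_T$, and check the clauses of Definition \ref{def:mirorshade} for the restriction of $g$. Where you diverge is in what occupies the middle of the argument. The paper devotes it to a chain of symbolic identities, $\rho({\bf s}_T\times{\bf s}_T) = \rho(g({\bf w}_T)\times g({\bf w}_T)) = \cdots = f^{-1}g(\rho(V_T\times V_T))$, which pushes $g$ and $f^{-1}$ through $\rho$ as though they commute with it; strictly speaking these expressions do not typecheck ($\rho$ is declared on ${\bf w}\times{\bf w}$, not on pairs of vector entries or of vertices), and the chain is only a formal way of asserting that the one-one correspondences carry the edge structure along. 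You instead spend your effort on the requirement the paper never touches: Definition \ref{def:optimalmikrokluster} requires $T$ to be a tree, and an arbitrary ${\bf w}_T$ induces a sub-graph that may contain cycles, so you invoke the cycle-removal construction sketched after that definition to extract a spanning tree on the same word set while preserving the threshold ordering $\nu_i\geq\cdots\geq\nu_j\geq\alpha$. This is precisely the gap in the paper's proof --- its argument would apply verbatim to any sub-graph, acyclic or not --- and your flagging of the tree-extraction step as the main obstacle is accurate. Your additional appeal to Lemma \ref{lem:es2}, certifying that the words of ${\bf w}_T$ lie in an event space as Definition \ref{def:mirorshade} demands, likewise makes explicit a condition the paper leaves implicit. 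In short: same route, but your version verifies the two conditions (acyclicity and event-space membership) that actually need verifying, while the paper substitutes a composition calculus that establishes neither.
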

\begin{proof}
Let ${\bf s}_T\subseteq {\bf s}$, based on Definition \ref{def:mirorshade} we have ${\bf w}_T\subseteq {\bf w}$, i.e. $g({\bf w}_T) = {\bf s}_T$ or because of $g$ is one-one function, $g^{-1}({\bf s}_T) = {\bf w}_T\subseteq{\bf w}$. Next, by applying Definition \ref{def:mikrokluster}, $f({\bf w}_T) = V_T$, or because of $f$ is one-one function, $f^{-1}(V_T) = {\bf w}_T\subseteq {\bf w}$, and $s_T = g({\bf w}_T) = g(f^{-1}(V_T))=f^{-1}g(V_T)$ and we obtain $\rho({\bf w},{\bf w}) = \rho(f^{-1}(V),f^{-1}(V))\subseteq E$, so $\rho({\bf s}_T\times{\bf s}_T) = \rho(g({\bf w}_T)\times g({\bf w}_T)) =\rho(g(f^{-1}(V_T))\times g(f^{-1}(V_T))) = \rho(f^{-1}g(V_T))\times f^{-1}g(V_T))) = f^{-1}g\rho(V_T\times V_T) = f^{-1}g(\rho(V_T\times v_T))$ because of $f^{-1}g$ is also one-one function, this means that $V_T\subseteq V$ has ${\bf s}_T$ as a mirror shade of ${\bf w}_T$. \kotakt
\end{proof}

\section{Conclusions and Future Work}
The selective properties have been derived from the singleton and doubleton based on the tiplet concept. Through these properties have been proven the existence of the shadow of any micro-clusters for the space of events. Our future work is about the relation between adaptive and selective properties for exploring an overlap principle.

\end{document}